\providecommand{\U}[1]{\protect\rule{.1in}{.1in}}
\newtheorem{theorem}{Theorem}
\newtheorem{corollary}[theorem]{Corollary}
\newtheorem{definition}[theorem]{Definition}
\newtheorem{lemma}[theorem]{Lemma}
\newtheorem{notation}[theorem]{Notation}
\newtheorem{proposition}[theorem]{Proposition}
\newenvironment{proof}[1][Proof]{\textbf{#1.} }{\ \rule{0.5em}{0.5em}}
\begin{document}

\title{From the Biot-Savart Law\\to Amp\`{e}re's Magnetic Circuital Law\\via Synthetic Differential Geometry}
\author{Hirokazu NISHIMURA\\Institute of Mathematics, University of Tsukuba\\Tsukuba, Ibaraki, 305-8571\\JAPAN}
\maketitle

\begin{abstract}
It is well known in classical electrodynamics that the magnetic field given by
a current loop and the electric field caused by the corresponding electric
dipoles in sheets are very similar, as far as we are far away from the loop,
which enables us to deduce Amp\`{e}re's magnetic circuital law from the
Biot-Savart law easily. The principal objective in this paper is to show that
synthetic differential geometry, in which nilpotent infinitesimals are
available in abundance, furnishes out a natural framework for the exquisite
formulation of this similitude and its demonstration. This similitude in
heaven enables us to transit from the Biot-Savart law to Amp\`{e}re's magnetic
circuital law like a shot on earth.

\end{abstract}

\section{\label{s1}Introduction}

It is well known among physicists (see, e.g., \cite{wa}) that the magnetic
field given by a \textit{current loop} and the electric field caused by the
corresponding \textit{electric dipoles} in sheets are very similar, as far as
we are far away from the loop, which enables us to deduce \textit{Amp\`{e}re's
magnetic circuital law} from the \textit{Biot-Savart law} easily. However, a
mathematically satisfactory formulation of this similitude is by no means
easy, let alone its proof based upon the Coulomb and Biot-Savart laws.

In good old days of the 17th and 18th centuries, mathematicians and physicists
could communicate easily with ones of the other species, and many excellent
mathematicians were physicists at the same time and vice versa. The honeymoon
was over when mathematicians rushed into eradication of their shabby
\textit{nilpotent infinitesimals} by replacing them with their authentic
$\varepsilon$-$\delta$ arguments.

In the middle of the 20th century, moribund nilpotent infinitesimals were
resurrected in not earthly but heavenly manners by \textit{synthetic
differential geometers}. They have constructed another world of mathematics,
called a \textit{well-adapted model} (a kind of Grothendieck toposes), in
which they could indulge themselves in their favorite nilpotent
infinitesimals. We have a route from the earth to heaven
(\textit{internalization}) and another route in the opposite direction
(\textit{externalization}), so that our synthetic formulation and
demonstration of the similitude is of earthly significance. For synthetic
differential geometry, the reader is referred to \cite{ko} and \cite{la}.

The very similitude is formulated and established synthetically in
\S \ref{s4}, which is preceded by a synthetical approach to electric dipoles
in sheets in \S \ref{s3}. Once the similitude is firmly established within a
well-adapted model, some of its consequences are externalized, which enables
us to derive the Amp\`{e}re's magnetic circuital law from the Biot-Savart law,
as is seen in \S \ref{s5}. In a subsequent paper, we will discuss Vassiliev
invariants in knot theory (cf. \cite{va1} and \cite{va2}) from this standpoint.

\section{\label{s2}Preliminaries}

In this section we fix our notation for static electric fields and static
magnetic fields. Since we would like to concentrate upon mathematical aspects,
we omit unnecessary physical constants or the like from this standpoint.

\subsection{\label{s2.1}Static Electric Fields}

Given a figure $\Omega$\ in $\mathbf{R}^{3}$\ and a mapping $q:\Omega
\rightarrow\mathbf{R}$ (as the density of electric charge), the static
electric field $\mathbf{E}_{\left(  \Omega,q\right)  }:\mathbf{R}%
^{3}\rightarrow\mathbf{R}^{3}$ associated with $\left(  \Omega,q\right)  $\ is
given by an integral. Namely, the Coulomb law tells us that
\[
\mathbf{E}_{\left(  \Omega,q\right)  }\left(  \mathbf{x}\right)  =\int
_{\Omega}\frac{q\left(  \mathbf{p}\right)  \left(  \mathbf{x}-\mathbf{p}%
\right)  }{\left\|  \mathbf{x}-\mathbf{p}\right\|  ^{3}}dp
\]
for any $\mathbf{x}\in\mathbf{R}^{3}$, where the integral is the volume
integral, the surface integral or the line integral according to whether the
figure $\Omega$\ is three-dimensional, two-dimensional or one-dimensional. As
is well known, the following Maxwell equations obtain:%

\begin{align}
\mathrm{div}\,\mathbf{E}_{\left(  \Omega,q\right)  }  &  =4\pi q\label{2.1.1}%
\\
\mathrm{rot}\,\mathbf{E}_{\left(  \Omega,q\right)  }  &  =\mathbf{0}%
\label{2.1.2}%
\end{align}
Now we consider electric dipoles. Let $S$\ be an oriented surface in
$\mathbf{R}^{3}$\ and $\sigma,h\in\mathbf{R}$. Let $\mathbf{n}_{S}%
:S\rightarrow\mathbf{R}^{3}$ be the unit normal in the positive direction. We
slide the surface $S$\ by $\frac{h}{2}\mathbf{n}_{S}$ to get the surface
$S_{\frac{h}{2}}$. The surface $S_{\frac{h}{2}}$ endowed with the constant
density $\sigma$\ of electric charge gives rise to the static electric field
$\mathbf{E}_{\left(  S,\sigma,h\right)  }^{+}$\ by the Coulomb law. Similarly,
We slide the surface $S$\ by $-\frac{h}{2}\mathbf{n}_{S}$ to get the surface
$S_{-\frac{h}{2}}$. The surface $S_{-\frac{h}{2}}$ endowed with the constant
density $-\sigma$\ of electric charge gives rise to the static electric field
$\mathbf{E}_{\left(  S,\sigma,h\right)  }^{-}$\ by the Coulomb law. They
together yield the static electric field
\[
\mathbf{E}_{\left(  S,\sigma,h\right)  }^{\mathrm{dp}}=\mathbf{E}_{\left(
S,\sigma,h\right)  }^{+}+\mathbf{E}_{\left(  S,\sigma,h\right)  }^{-}%
\]
by the Coulomb law.

\subsection{\label{s2.2}The Biot-Savart Law and Amp\`{e}re's Magnetic
Circuital Law}

The static magnetic field caused by a current loop is given by the so-called
Biot-Savart law, so that, given a loop $C:t\in\left[  0,t_{0}\right]
\mapsto\mathbf{m}\left(  t\right)  \in\mathbf{R}^{3}$, it gives rise to its
static magnetic field $\mathbf{B}_{C}$ by%

\begin{align}
\mathbf{B}_{C}\left(  \mathbf{x}\right)   &  =\frac{1}{4\pi}\int_{0}^{t_{0}%
}\frac{\left(  \mathbf{x}-\mathbf{m}\left(  t\right)  \right)  \times
\frac{d\mathbf{m}}{dt}\left(  t\right)  }{\left\|  \mathbf{x}-\mathbf{m}%
\left(  t\right)  \right\|  ^{3}}dt\nonumber\\
&  =\frac{1}{4\pi}\int_{C}\frac{\left(  \mathbf{x}-\mathbf{r}\right)  \times
d\mathbf{r}}{\left\|  \mathbf{x}-\mathbf{r}\right\|  ^{3}}\label{2.2.1}%
\end{align}
for any $\mathbf{x}\in\mathbf{R}^{3}$, where $\mathbf{r}$\ moves along the
curve $C$. Given another loop $L:s\in\left[  0,s_{0}\right]  \mapsto
\mathbf{l}\left(  t\right)  \in\mathbf{R}^{3}$, Amp\`{e}re's magnetic
circuital law claims that
\begin{align}
&  \frac{1}{4\pi}\int_{0}^{s_{0}}ds\int_{0}^{t_{0}}dt\frac{\left(  \left(
\mathbf{l}\left(  s\right)  -\mathbf{m}\left(  t\right)  \right)  \times
\frac{d\mathbf{m}}{dt}\left(  t\right)  \right)  \cdot\frac{d\mathbf{l}}%
{ds}\left(  s\right)  }{\shortparallel\mathbf{l}\left(  s\right)
-\mathbf{m}\left(  t\right)  \shortparallel^{3}}\nonumber\\
&  =\frac{1}{4\pi}\int_{L}\int_{C}\frac{\left(  \left(  \mathbf{s}%
-\mathbf{r}\right)  \times d\mathbf{r}\right)  \cdot d\mathbf{s}}{\left\|
\mathbf{s}-\mathbf{r}\right\|  ^{3}}\nonumber\\
&  =\mathbf{Lk}\left(  C,L\right) \label{2.2.2}%
\end{align}
where $\mathbf{s}$\ moves along the curve $L$, and $\mathbf{Lk}\left(
C,L\right)  $ is defined as follows:

\begin{definition}
Let $S$ be an oriented surface with its induced oriented boundary $L$, which
is supposed to be transversal to $C$\ at their intersecting points. They are
enumerated as
\[
S\cap C=\left\{  \mathbf{p}_{1},...,\mathbf{p}_{k}\right\}  \text{.}%
\]
We define $\varepsilon_{i}$ ($i=1,...,k$) to be $1$ if the tangent of $C$\ at
$\mathbf{p}_{i}$ transits $S$\ into the part that the orientation of
$S$\ selects, and $-1$ otherwise. Now we define
\[
\mathbf{Lk}\left(  C,L\right)  =\sum_{i=1}^{n}\varepsilon_{i}%
\]
The reader should note that the definition is independent of our choice of $S
$.
\end{definition}

Topology tells us that

\begin{proposition}
The number $\mathbf{Lk}\left(  C,L\right)  $ has the following properties:

\begin{enumerate}
\item It is symmetric in the sense that
\[
\mathbf{Lk}\left(  C,L\right)  =\mathbf{Lk}\left(  L,C\right)
\]

\item For any oriented surface $S$ with $\partial S=L\cup-L^{\prime}$, if it
does not intersect $C$, then we have
\[
\mathbf{Lk}\left(  C,L\right)  =\mathbf{Lk}\left(  C,L^{\prime}\right)
\]
where $-L^{\prime}$ denotes the same curve $L^{\prime}$ with the orientation reversed.
\end{enumerate}
\end{proposition}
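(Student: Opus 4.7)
The plan is to dispatch the two assertions in opposite order, leaving the symmetry (which is the genuinely topological statement) for last.

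For part (2), I would lean directly on the remark embedded in the definition, namely that $\mathbf{Lk}(C,L)$ does not depend on which oriented surface with boundary $L$ is used. Choose any oriented surface $\Sigma'$ with $\partial\Sigma'=L'$, transversal to $C$. Since $\partial S=L\cup-L'$, orienting things coherently makes $\Sigma:=S\cup\Sigma'$ an oriented surface with $\partial\Sigma=L$. The hypothesis $S\cap C=\emptyset$ forces $\Sigma\cap C=\Sigma'\cap C$, and at each intersection point the local transversality sign $\varepsilon_{i}$ is computed from the tangent of $C$ against the same local oriented tangent plane of $\Sigma'$. Summing these signs gives simultaneously $\mathbf{Lk}(C,L)$ (via $\Sigma$) and $\mathbf{Lk}(C,L')$ (via $\Sigma'$), hence the two are equal.

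For part (1) I would use the standard Seifert-surface cobordism argument. Pick oriented surfaces $\Sigma_{C}$ and $\Sigma_{L}$ with $\partial\Sigma_{C}=C$ and $\partial\Sigma_{L}=L$, chosen so that $\Sigma_{C}\pitchfork\Sigma_{L}$, $C\pitchfork\Sigma_{L}$, and $\Sigma_{C}\pitchfork L$, with all three intersections mutually disjoint from the other boundary. The locus $\Sigma_{C}\cap\Sigma_{L}$ is then a compact oriented $1$-manifold properly embedded in $\mathbf{R}^{3}$, whose endpoints lie on $C\cup L$ and are exactly the points of $(C\cap\Sigma_{L})\cup(\Sigma_{C}\cap L)$. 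The induced boundary orientation at these endpoints matches the transversal-intersection sign $\varepsilon_{i}$ used in the definition, up to an overall discrepancy of a minus sign between the two groups of endpoints. Since the signed endpoint count of any compact oriented $1$-manifold is zero, this gives $\mathbf{Lk}(C,L)=\mathbf{Lk}(L,C)$.

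The step I expect to be the main obstacle is the sign bookkeeping in part (1): confirming that the two groups of endpoints of $\Sigma_{C}\cap\Sigma_{L}$ acquire opposite induced boundary orientations in just such a way that the equality reads as $\mathbf{Lk}(C,L)=\mathbf{Lk}(L,C)$ rather than its negative. This is a routine but fiddly check with the conventions (order of factors in the tangent-space decomposition $T\mathbf{R}^{3}=T\Sigma_{C}\oplus T\Sigma_{L}^{\perp}$ versus $T\mathbf{R}^{3}=T\Sigma_{L}\oplus T\Sigma_{C}^{\perp}$); every other ingredient, including the general-position assumption and the compactness of $\Sigma_{C}\cap\Sigma_{L}$, is essentially automatic.
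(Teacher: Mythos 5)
Your proposal is correct in outline, but note that the paper itself offers no proof of this proposition at all: it is introduced with the phrase ``Topology tells us that'' and left as a citation to standard topology, so there is no argument in the paper to compare yours against. What you supply is the standard differential-topological proof, and both halves are sound. For part (2) you correctly reduce the claim to the surface-independence remark embedded in the paper's definition of $\mathbf{Lk}$ by gluing $S$ to a Seifert surface $\Sigma'$ of $L'$ along $L'$ (the orientations do match, since $S$ induces $-L'$ and $\Sigma'$ induces $L'$); the only caveat is that the glued surface has a corner along $L'$ and may fail to be embedded, so strictly one should smooth it and perturb to general position before invoking the definition --- a routine matter that every such argument elides. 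For part (1) the cobordism argument via the $1$-manifold $\Sigma_C\cap\Sigma_L$ is the classical proof of symmetry of the linking number, and you are right that the only delicate point is verifying that the induced boundary orientations at the two families of endpoints produce $\mathbf{Lk}(C,L)-\mathbf{Lk}(L,C)=0$ rather than the sum; you flag this but do not carry it out, so that sign check remains the one genuinely open step in your write-up. In short, you have done strictly more than the paper, which simply asserts the result.
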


\begin{notation}
The first and the second formulas of (\ref{2.2.2}) is denoted by
\[
\mathbf{A}\left(  C,L\right)
\]

\end{notation}

\section{\label{s3}Synthetic Differential Geometry of Electric Dipoles in
Infinitesimal Sheets}

In this and the subsequent sections we are working within a well-adapted model.

\begin{notation}
We denote by $\mathbb{R}$\ the set of real numbers containing nilpotent
infinitesimals in abundance (called a line object in synthetic differential
geometry). We denote by $\mathbb{R}_{+}$\ the set
\[
\left\{  x\in\mathbb{R\mid}x>0\right\}
\]
We denote by $D$\ the set
\[
\left\{  d\in\mathbb{R}\mid d^{2}=0\right\}
\]
Intuitively, $D$ stands for the set of first-order infinitesimals.
\end{notation}

Let $m$ be an integer and $n$ a natural number. For the mapping
\[
x\in\mathbb{R}_{+}\mathbb{\mapsto}x^{m}\in\mathbb{R}%
\]
we have
\begin{equation}
\left(  x+d\right)  ^{m}=x^{m}+mx^{m-1}d\label{3.1}%
\end{equation}
for any $d\in D$,\ as is well known. For the mapping
\[
x\in\mathbb{R}_{+}\mathbb{\mapsto}x^{\frac{m}{n}}\in\mathbb{R}%
\]
we have

\begin{lemma}
\label{t3.1}
\[
\left(  x+d\right)  ^{\frac{m}{n}}=x^{\frac{m}{n}}+\frac{m}{n}x^{\frac{m}%
{n}-1}d
\]

\end{lemma}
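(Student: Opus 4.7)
The plan is to bootstrap from the integer case (\ref{3.1}) by invoking the Kock-Lawvere axiom. Since $x\in\mathbb{R}_{+}$ and $d\in D$, the quantity $(x+d)^{m/n}$ is unambiguously defined; the Kock-Lawvere axiom applied to the smooth function $d\mapsto(x+d)^{m/n}$ on $D$ yields a unique scalar $\alpha=\alpha(x)\in\mathbb{R}$ with
\[
(x+d)^{m/n}=x^{m/n}+\alpha d\qquad\text{for all }d\in D.
\]
The whole task reduces to identifying $\alpha$ as $\tfrac{m}{n}x^{m/n-1}$.

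To pin down $\alpha$, I would raise both sides to the $n$-th power. On the left we get $\bigl((x+d)^{m/n}\bigr)^{n}=(x+d)^{m}$, which by (\ref{3.1}) equals $x^{m}+mx^{m-1}d$. On the right we expand $(x^{m/n}+\alpha d)^{n}$ by the binomial theorem; because $d^{2}=0$, every term of total degree $\geq 2$ in $d$ vanishes, leaving
\[
(x^{m/n}+\alpha d)^{n}=x^{m}+nx^{m(n-1)/n}\alpha\,d.
\]
Equating the two expressions and invoking the uniqueness clause of the Kock-Lawvere axiom (the cancellation principle that $ad=bd$ for every $d\in D$ forces $a=b$) yields $nx^{m(n-1)/n}\alpha=mx^{m-1}$, whence
\[
\alpha=\frac{m}{n}x^{m-1-m(n-1)/n}=\frac{m}{n}x^{m/n-1},
\]
as required.

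The argument is essentially mechanical once the Kock-Lawvere step is granted; no real obstacle arises. The only point deserving a moment's care is that $x^{m/n}$ must be well defined and single-valued, which is assured by the restriction $x\in\mathbb{R}_{+}$, so the $n$-th power identification of $\alpha$ does not suffer from a choice-of-root ambiguity.
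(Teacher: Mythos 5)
Your proposal is correct and follows essentially the same route as the paper's own proof: apply the Kock--Lawvere axiom to get a unique slope, raise to the $n$-th power, compare against $(x+d)^{m}=x^{m}+mx^{m-1}d$ via the binomial theorem with $d^{2}=0$, and cancel $d$. The exponent arithmetic $m-1-\tfrac{m(n-1)}{n}=\tfrac{m}{n}-1$ matches the paper's computation exactly.
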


\begin{proof}
By the Kock-Lawvere axiom, there exists a unique\ $a\in\mathbb{R}$ such that
\[
\left(  x+d\right)  ^{\frac{m}{n}}=x^{\frac{m}{n}}+ad
\]
for any $d\in D$. On the one hand, we have
\[
\left(  \left(  x+d\right)  ^{\frac{m}{n}}\right)  ^{n}=\left(  x+d\right)
^{m}=x^{m}+mx^{m-1}d
\]
by (\ref{3.1}). On the other hand, we have
\[
\left(  x^{\frac{m}{n}}+ad\right)  ^{n}=\left(  x^{\frac{m}{n}}\right)
^{n}+n\left(  x^{\frac{m}{n}}\right)  ^{n-1}ad
\]
by the binomial theorem. Therefore we have
\[
mx^{m-1}=n\left(  x^{\frac{m}{n}}\right)  ^{n-1}a
\]
so that
\[
a=\frac{m}{n}x^{m-1}\left(  x^{\frac{m}{n}}\right)  ^{1-n}=\frac{m}{n}%
x^{\frac{n\left(  m-1\right)  +m\left(  1-n\right)  }{n}}=\frac{m}{n}%
x^{\frac{m}{n}-1}%
\]

\end{proof}

\begin{corollary}
\label{t3.1.1}Let $\mathbf{x},\mathbf{a}\in\mathbb{R}^{3}$ with $\mathbf{x}%
\neq\mathbf{0}$. Then we have
\[
\left\|  \mathbf{x}+\mathbf{a}d\right\|  ^{-3}=\left\|  \mathbf{x}\right\|
^{-3}-3\left\|  \mathbf{x}\right\|  ^{-5}\left(  \mathbf{x}\cdot
\mathbf{a}\right)  d
\]
for any $d\in D$, where $\left\|  \mathbf{x}\right\|  $ is the standard norm
of $\mathbf{x}$\ (i.e., $\left\|  \mathbf{x}\right\|  =\sqrt{\left(
x_{1}\right)  ^{2}+\left(  x_{2}\right)  ^{2}+\left(  x_{3}\right)  ^{2}}$
with $\mathbf{x}=\left(  x_{1},x_{2},x_{3}\right)  $) and $\cdot$ stands for
the inner product.
\end{corollary}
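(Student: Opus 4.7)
The plan is to reduce the claim to a direct application of Lemma \ref{t3.1}. I would first square the norm to eliminate the square root and exploit the nilpotence $d^{2}=0$: expanding
\[
\left\|\mathbf{x}+\mathbf{a}d\right\|^{2}=\left\|\mathbf{x}\right\|^{2}+2(\mathbf{x}\cdot\mathbf{a})d+\left\|\mathbf{a}\right\|^{2}d^{2},
\]
the last term drops out and we are left with $\left\|\mathbf{x}\right\|^{2}+2(\mathbf{x}\cdot\mathbf{a})d$. Hence
\[
\left\|\mathbf{x}+\mathbf{a}d\right\|^{-3}=\bigl(\left\|\mathbf{x}\right\|^{2}+2(\mathbf{x}\cdot\mathbf{a})d\bigr)^{-3/2}.
\]

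Next I would apply Lemma \ref{t3.1} to the function $x\mapsto x^{m/n}$ with $m=-3$ and $n=2$, evaluated at $x=\left\|\mathbf{x}\right\|^{2}$. The increment here is $\tilde d:=2(\mathbf{x}\cdot\mathbf{a})d$; since $d\in D$, we have $\tilde d^{2}=4(\mathbf{x}\cdot\mathbf{a})^{2}d^{2}=0$, so $\tilde d\in D$ and the lemma applies. Note also that the hypothesis $\mathbf{x}\neq\mathbf{0}$ guarantees $\left\|\mathbf{x}\right\|^{2}\in\mathbb{R}_{+}$, which is required to take the non-integer power. The lemma then yields
\[
\bigl(\left\|\mathbf{x}\right\|^{2}+\tilde d\bigr)^{-3/2}=\left\|\mathbf{x}\right\|^{-3}+\Bigl(-\tfrac{3}{2}\Bigr)\left\|\mathbf{x}\right\|^{-5}\cdot 2(\mathbf{x}\cdot\mathbf{a})d=\left\|\mathbf{x}\right\|^{-3}-3\left\|\mathbf{x}\right\|^{-5}(\mathbf{x}\cdot\mathbf{a})d,
\]
which is exactly the stated identity.

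There is no real obstacle: the whole proof is two lines of bookkeeping. The only point that deserves a moment's thought is the legitimacy of using Lemma \ref{t3.1} on the composite $d\mapsto\left\|\mathbf{x}+\mathbf{a}d\right\|^{-3}$ by factoring it through $x\mapsto x^{-3/2}$ and the auxiliary infinitesimal $\tilde d$; this is precisely the kind of substitution that works cleanly in a well-adapted model because $D$ is closed under multiplication by arbitrary elements of $\mathbb{R}$, and the square-zero property propagates automatically.
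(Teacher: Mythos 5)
Your proof is correct and is essentially identical to the paper's: both square the norm, expand the inner product so that the $d^{2}$ term vanishes, and then apply Lemma \ref{t3.1} with exponent $-\tfrac{3}{2}$ at $\left\|\mathbf{x}\right\|^{2}$ with increment $2(\mathbf{x}\cdot\mathbf{a})d\in D$. Your remarks on why the substitution is legitimate are a slightly more careful spelling-out of what the paper leaves implicit.
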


\begin{proof}
We have
\begin{align*}
&  \left\|  \mathbf{x}+\mathbf{a}d\right\|  ^{-3}\\
&  =\left(  \left\|  \mathbf{x}+\mathbf{a}d\right\|  ^{2}\right)  ^{-\frac
{3}{2}}\\
&  =\left(  \left(  \mathbf{x}+\mathbf{a}d\right)  \cdot\left(  \mathbf{x}%
+\mathbf{a}d\right)  \right)  ^{-\frac{3}{2}}\\
&  =\left(  \left(  \mathbf{x}\cdot\mathbf{x}\right)  +2\left(  \mathbf{x}%
\cdot\mathbf{a}\right)  d\right)  ^{-\frac{3}{2}}\\
&  =\left\|  \mathbf{x}\right\|  ^{-3}-3\left\|  \mathbf{x}\right\|
^{-5}\left(  \mathbf{x}\cdot\mathbf{a}\right)  d\\
&  \text{\lbrack By Lemma \ref{t3.1}]}%
\end{align*}

\end{proof}

\begin{proposition}
\label{t3.2}Let $d,e,h\in D$, $\sigma\in\mathbb{R}$ and $\mathbf{x}%
,\mathbf{a},\mathbf{b},\mathbf{r}\in\mathbb{R}^{3}$ with $\mathbf{x}%
\neq\mathbf{r}$, $\mathbf{a}\times\mathbf{b}\neq\mathbf{0}$. Let $S$\ be the
infinitesimal parallelogram spanned by $\mathbf{x}$, $\mathbf{x}+d\mathbf{a}%
$\ and $\mathbf{x}+e\mathbf{b}$.
\[%
\begin{array}
[c]{ccc}%
\mathbf{x} &
\begin{array}
[c]{c}%
e\mathbf{b}\\
\rightarrow
\end{array}
& \mathbf{x}+e\mathbf{b}\\%
\begin{array}
[c]{cc}%
d\mathbf{a} & \downarrow
\end{array}
& S &
\begin{array}
[c]{cc}%
\downarrow & d\mathbf{a}%
\end{array}
\\
\mathbf{x}+d\mathbf{a} &
\begin{array}
[c]{c}%
\rightarrow\\
e\mathbf{b}%
\end{array}
& \mathbf{x}+d\mathbf{a}+e\mathbf{b}%
\end{array}
\]
Then we have
\[
\mathbf{E}_{\left(  S,\sigma,h\right)  }^{\mathrm{dp}}\left(  \mathbf{r}%
\right)  =\frac{h\sigma de}{\left\|  \mathbf{r}-\mathbf{x}\right\|  ^{3}%
}\left(  3\left(  \frac{\mathbf{r}-\mathbf{x}}{\left\|  \mathbf{r}%
-\mathbf{x}\right\|  }\cdot\left(  \mathbf{a}\times\mathbf{b}\right)  \right)
\frac{\mathbf{r}-\mathbf{x}}{\left\|  \mathbf{r}-\mathbf{x}\right\|  }-\left(
\mathbf{a}\times\mathbf{b}\right)  \right)
\]

\end{proposition}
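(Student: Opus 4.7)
The proof is a direct computation combining the Coulomb law applied to the two infinitesimal sheets with Corollary \ref{t3.1.1}. Write $\mathbf{n}_{S}=(\mathbf{a}\times\mathbf{b})/\|\mathbf{a}\times\mathbf{b}\|$ for the positive unit normal to $S$. I would proceed in three steps.

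First, I would collapse each Coulomb surface integral to a single point evaluation. Parametrize $S_{\pm h/2}$ by $(s,t)\in[0,d]\times[0,e]\mapsto\mathbf{x}\pm(h/2)\mathbf{n}_{S}+s\mathbf{a}+t\mathbf{b}$, whose scalar area element is $\|\mathbf{a}\times\mathbf{b}\|\,ds\,dt$. Iterating the synthetic identity $\int_{0}^{d}g(s)\,ds=d\cdot g(0)$ for $d\in D$, which is an immediate consequence of the Kock-Lawvere axiom applied to an antiderivative of $g$, reduces the Coulomb integrals to
\[
\mathbf{E}^{\pm}_{(S,\sigma,h)}(\mathbf{r})=\pm\sigma\, de\,\|\mathbf{a}\times\mathbf{b}\|\,\frac{\mathbf{r}-\mathbf{x}\mp(h/2)\mathbf{n}_{S}}{\|\mathbf{r}-\mathbf{x}\mp(h/2)\mathbf{n}_{S}\|^{3}}.
\]

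Second, I would expand the inverse cubes in the infinitesimal $h$. Setting $\mathbf{y}=\mathbf{r}-\mathbf{x}$ and applying Corollary \ref{t3.1.1} with nilpotent increment $\mp(h/2)\mathbf{n}_{S}$ yields $\|\mathbf{y}\mp(h/2)\mathbf{n}_{S}\|^{-3}=\|\mathbf{y}\|^{-3}\pm\tfrac{3h}{2}\|\mathbf{y}\|^{-5}(\mathbf{y}\cdot\mathbf{n}_{S})$. Multiplying by the accompanying vector $\mathbf{y}\mp(h/2)\mathbf{n}_{S}$, discarding products that are nilpotent of order greater than one in $h$ (since $h^{2}=0$), and then summing the two sheets, all $h$-free terms cancel between the oppositely charged sheets. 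What survives is exactly
\[
\mathbf{E}^{\mathrm{dp}}_{(S,\sigma,h)}(\mathbf{r})=\sigma\, de\,\|\mathbf{a}\times\mathbf{b}\|\cdot\frac{h}{\|\mathbf{y}\|^{3}}\left(3\frac{(\mathbf{y}\cdot\mathbf{n}_{S})\mathbf{y}}{\|\mathbf{y}\|^{2}}-\mathbf{n}_{S}\right).
\]
Finally, pulling the scalar $\|\mathbf{a}\times\mathbf{b}\|$ inside the bracket and using $\|\mathbf{a}\times\mathbf{b}\|\mathbf{n}_{S}=\mathbf{a}\times\mathbf{b}$ rewrites the parenthesis as $3((\mathbf{y}/\|\mathbf{y}\|)\cdot(\mathbf{a}\times\mathbf{b}))(\mathbf{y}/\|\mathbf{y}\|)-(\mathbf{a}\times\mathbf{b})$, producing precisely the claimed expression.

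\textbf{The hard part} will be the first step: justifying that the Coulomb surface integral over each infinitesimal parallelogram really collapses to the integrand at the corner multiplied by the scalar area $de\|\mathbf{a}\times\mathbf{b}\|$. One must keep straight that $d^{2}=e^{2}=h^{2}=0$ while the products $de$ and $deh$ are generically nonzero, and verify that no hidden contribution of order $sdeh$ or the like survives the cancellation between the positive and negative sheets. Once that collapse is in hand, the rest is mechanical algebra driven entirely by Corollary \ref{t3.1.1}.
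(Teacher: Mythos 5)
Your proposal is correct and follows essentially the same route as the paper: collapse each Coulomb integral over the infinitesimal sheet to the area $de\|\mathbf{a}\times\mathbf{b}\|$ times the kernel at the displaced corner, expand the inverse cubes in $h$ via Corollary \ref{t3.1.1}, and let the $h$-free terms cancel between the two sheets. The only difference is that the paper writes the collapsed form of $\mathbf{E}^{\pm}_{(S,\sigma,h)}(\mathbf{r})$ down without comment, whereas you explicitly justify it with the synthetic identity $\int_{0}^{d}g(s)\,ds=d\cdot g(0)$ — the step you rightly flag as the one needing care.
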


\begin{proof}%
\begin{align*}
\mathbf{E}_{\left(  S,\sigma,h\right)  }^{+}\left(  \mathbf{r}\right)   &
=\sigma de\left\|  \mathbf{a}\times\mathbf{b}\right\|  \left\|  \mathbf{r}%
-\left(  \mathbf{x}+\frac{h}{2}\frac{\mathbf{a}\times\mathbf{b}}{\left\|
\mathbf{a}\times\mathbf{b}\right\|  }\right)  \right\|  ^{^{-3}}\left(
\mathbf{r}-\left(  \mathbf{x}+\frac{h}{2}\frac{\mathbf{a}\times\mathbf{b}%
}{\left\|  \mathbf{a}\times\mathbf{b}\right\|  }\right)  \right) \\
&  =\sigma de\left\|  \mathbf{a}\times\mathbf{b}\right\|  \left\|  \left(
\mathbf{r}-\mathbf{x}\right)  -\frac{h}{2}\frac{\mathbf{a}\times\mathbf{b}%
}{\left\|  \mathbf{a}\times\mathbf{b}\right\|  }\right\|  ^{-3}\left(  \left(
\mathbf{r}-\mathbf{x}\right)  -\frac{h}{2}\frac{\mathbf{a}\times\mathbf{b}%
}{\left\|  \mathbf{a}\times\mathbf{b}\right\|  }\right) \\
&  =\sigma de\left\|  \mathbf{a}\times\mathbf{b}\right\|  \left(  \left\|
\mathbf{r}-\mathbf{x}\right\|  ^{-3}+\frac{3h\left(  \left(  \mathbf{r}%
-\mathbf{x}\right)  \cdot\left(  \mathbf{a}\times\mathbf{b}\right)  \right)
}{2\left\|  \mathbf{a}\times\mathbf{b}\right\|  }\left\|  \mathbf{r}%
-\mathbf{x}\right\|  ^{-5}\right) \\
&  \left(  \left(  \mathbf{r}-\mathbf{x}\right)  -\frac{h}{2}\frac
{\mathbf{a}\times\mathbf{b}}{\left\|  \mathbf{a}\times\mathbf{b}\right\|
}\right) \\
&  \text{[By Corollary \ref{t3.1.1}]}%
\end{align*}
On the other hand, we have
\begin{align*}
\mathbf{E}_{\left(  S,\sigma,h\right)  }^{-}\left(  \mathbf{r}\right)   &
=-\sigma de\left\|  \mathbf{a}\times\mathbf{b}\right\|  \left\|
\mathbf{r}-\left(  \mathbf{x}-\frac{h}{2}\frac{\mathbf{a}\times\mathbf{b}%
}{\left\|  \mathbf{a}\times\mathbf{b}\right\|  }\right)  \right\|  ^{^{-3}%
}\left(  \mathbf{r}-\left(  \mathbf{x}-\frac{h}{2}\frac{\mathbf{a}%
\times\mathbf{b}}{\left\|  \mathbf{a}\times\mathbf{b}\right\|  }\right)
\right) \\
&  =-\sigma de\left\|  \mathbf{a}\times\mathbf{b}\right\|  \left\|  \left(
\mathbf{r}-\mathbf{x}\right)  +\frac{h}{2}\frac{\mathbf{a}\times\mathbf{b}%
}{\left\|  \mathbf{a}\times\mathbf{b}\right\|  }\right\|  ^{-3}\left(  \left(
\mathbf{r}-\mathbf{x}\right)  +\frac{h}{2}\frac{\mathbf{a}\times\mathbf{b}%
}{\left\|  \mathbf{a}\times\mathbf{b}\right\|  }\right) \\
&  =-\sigma de\left\|  \mathbf{a}\times\mathbf{b}\right\|  \left(  \left\|
\mathbf{r}-\mathbf{x}\right\|  ^{-3}-\frac{3h\left(  \left(  \mathbf{r}%
-\mathbf{x}\right)  \cdot\left(  \mathbf{a}\times\mathbf{b}\right)  \right)
}{2\left\|  \mathbf{a}\times\mathbf{b}\right\|  }\left\|  \mathbf{r}%
-\mathbf{x}\right\|  ^{-5}\right) \\
&  \left(  \left(  \mathbf{r}-\mathbf{x}\right)  +\frac{h}{2}\frac
{\mathbf{a}\times\mathbf{b}}{\left\|  \mathbf{a}\times\mathbf{b}\right\|
}\right) \\
&  \text{[By Corollary \ref{t3.1.1}]}%
\end{align*}
Therefore we have
\begin{align*}
\mathbf{E}_{\left(  S,\sigma,h\right)  }^{\mathrm{bp}}\left(  \mathbf{r}%
\right)   &  =\mathbf{E}_{\left(  S,\sigma,h\right)  }^{+}\left(
\mathbf{r}\right)  +\mathbf{E}_{\left(  S,\sigma,h\right)  }^{-}\left(
\mathbf{r}\right) \\
&  =\sigma de\left\|  \mathbf{a}\times\mathbf{b}\right\|  \left(
\begin{array}
[c]{c}%
\frac{3h\left(  \left(  \mathbf{r}-\mathbf{x}\right)  \cdot\left(
\mathbf{a}\times\mathbf{b}\right)  \right)  \left\|  \mathbf{r}-\mathbf{x}%
\right\|  ^{-5}}{\left\|  \mathbf{a}\times\mathbf{b}\right\|  }\left(
\mathbf{r}-\mathbf{x}\right)  -\\
\frac{h\left\|  \mathbf{r}-\mathbf{x}\right\|  ^{-3}}{\left\|  \mathbf{a}%
\times\mathbf{b}\right\|  }\left(  \mathbf{a}\times\mathbf{b}\right)
\end{array}
\right) \\
&  =\frac{h\sigma de}{\left\|  \mathbf{r}-\mathbf{x}\right\|  ^{3}}\left(
3\left(  \frac{\mathbf{r}-\mathbf{x}}{\left\|  \mathbf{r}-\mathbf{x}\right\|
}\cdot\left(  \mathbf{a}\times\mathbf{b}\right)  \right)  \frac{\mathbf{r}%
-\mathbf{x}}{\left\|  \mathbf{r}-\mathbf{x}\right\|  }-\left(  \mathbf{a}%
\times\mathbf{b}\right)  \right)
\end{align*}

\end{proof}

\section{\label{s4}The Similitude between the Electric Fields of Electric
Dipoles in Sheets and the Magnetic Fields of Current Loops within Synthetic
Differential Geometry}

The principal objective in this section is to establish the similitude between
the electric fields of dipoles in sheets and the magnetic fields of current
loops synthetically. The discussion is very similar to that in Stokes'
theorem, for which the reader is referred to \cite{ni1}, \cite{ni2} and
\cite{ni3}. Let us begin with

\begin{lemma}
\label{t4.1}For any vectors $\mathbf{a},\mathbf{b}\in\mathbb{R}^{3}$ and any
unit vector $\widehat{\mathbf{r}}\in\mathbb{R}^{3}$, we have
\[
\left(  \left(  \mathbf{a}\times\mathbf{b}\right)  \cdot\widehat{\mathbf{r}%
}\right)  \widehat{\mathbf{r}}=\mathbf{a}\times\mathbf{b+}\left(
\widehat{\mathbf{r}}\cdot\mathbf{a}\right)  \mathbf{b\times}\widehat
{\mathbf{r}}-\left(  \widehat{\mathbf{r}}\cdot\mathbf{b}\right)
\mathbf{a\times}\widehat{\mathbf{r}}%
\]

\end{lemma}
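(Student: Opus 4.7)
The statement is a purely algebraic identity in $\mathbb{R}^{3}$, with no infinitesimals involved, so I expect the proof to be a one-shot application of the vector triple product identity (BAC-CAB rule)
\[
\mathbf{u}\times(\mathbf{v}\times\mathbf{w})=(\mathbf{u}\cdot\mathbf{w})\mathbf{v}-(\mathbf{u}\cdot\mathbf{v})\mathbf{w}.
\]
The plan is to compute the same quantity, namely $\widehat{\mathbf{r}}\times\bigl((\mathbf{a}\times\mathbf{b})\times\widehat{\mathbf{r}}\bigr)$, in two different ways and equate the results.

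For the first computation, I would apply BAC-CAB with $\mathbf{u}=\widehat{\mathbf{r}}$, $\mathbf{v}=\mathbf{a}\times\mathbf{b}$, $\mathbf{w}=\widehat{\mathbf{r}}$, and use the fact that $\widehat{\mathbf{r}}\cdot\widehat{\mathbf{r}}=1$ (this is where the unit vector hypothesis enters), to obtain
\[
\widehat{\mathbf{r}}\times\bigl((\mathbf{a}\times\mathbf{b})\times\widehat{\mathbf{r}}\bigr)=\mathbf{a}\times\mathbf{b}-\bigl((\mathbf{a}\times\mathbf{b})\cdot\widehat{\mathbf{r}}\bigr)\widehat{\mathbf{r}}.
\]

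For the second computation, I would first apply BAC-CAB to the inner product $(\mathbf{a}\times\mathbf{b})\times\widehat{\mathbf{r}}=-\widehat{\mathbf{r}}\times(\mathbf{a}\times\mathbf{b})$, yielding $(\widehat{\mathbf{r}}\cdot\mathbf{a})\mathbf{b}-(\widehat{\mathbf{r}}\cdot\mathbf{b})\mathbf{a}$, and then cross with $\widehat{\mathbf{r}}$ on the left, which gives
\[
\widehat{\mathbf{r}}\times\bigl((\mathbf{a}\times\mathbf{b})\times\widehat{\mathbf{r}}\bigr)=(\widehat{\mathbf{r}}\cdot\mathbf{a})\,\mathbf{b}\times\widehat{\mathbf{r}}\;-\;(\widehat{\mathbf{r}}\cdot\mathbf{b})\,\mathbf{a}\times\widehat{\mathbf{r}},
\]
after converting each $\widehat{\mathbf{r}}\times\mathbf{b}$ and $\widehat{\mathbf{r}}\times\mathbf{a}$ by antisymmetry of the cross product.

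Equating the two expressions and solving for $\bigl((\mathbf{a}\times\mathbf{b})\cdot\widehat{\mathbf{r}}\bigr)\widehat{\mathbf{r}}$ yields the claim directly, with no case analysis or degeneracy issues. There is no real obstacle here: the only subtle point is getting the signs right when rewriting $\widehat{\mathbf{r}}\times\mathbf{a}=-\mathbf{a}\times\widehat{\mathbf{r}}$ and $\widehat{\mathbf{r}}\times\mathbf{b}=-\mathbf{b}\times\widehat{\mathbf{r}}$ in the second computation, and making sure the normalization $\|\widehat{\mathbf{r}}\|=1$ is actually used (without it, one would get an extra factor $\|\widehat{\mathbf{r}}\|^{2}$ in front of $\mathbf{a}\times\mathbf{b}$). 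An entirely coordinate-based verification would also work but would be considerably more tedious.
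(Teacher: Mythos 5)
Your strategy --- computing $\widehat{\mathbf{r}}\times\bigl((\mathbf{a}\times\mathbf{b})\times\widehat{\mathbf{r}}\bigr)$ twice via the triple product identity --- is sound and genuinely different from the paper's proof, which instead observes that both sides are bilinear in $(\mathbf{a},\mathbf{b})$ and verifies the identity on the standard basis vectors $\mathbf{i},\mathbf{j},\mathbf{k}$, working out the case $\mathbf{a}=\mathbf{i}$, $\mathbf{b}=\mathbf{j}$ explicitly and leaving the remaining cases to the reader. Your route avoids the case analysis entirely and makes visible exactly where the hypothesis $\widehat{\mathbf{r}}\cdot\widehat{\mathbf{r}}=1$ enters, whereas in the paper that hypothesis is buried inside the basis-case computation.

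However, your second displayed formula is wrong, and precisely at the step you yourself flag as the only subtle one. From $(\mathbf{a}\times\mathbf{b})\times\widehat{\mathbf{r}}=(\widehat{\mathbf{r}}\cdot\mathbf{a})\mathbf{b}-(\widehat{\mathbf{r}}\cdot\mathbf{b})\mathbf{a}$ (which is correct), crossing with $\widehat{\mathbf{r}}$ on the left gives $(\widehat{\mathbf{r}}\cdot\mathbf{a})\,\widehat{\mathbf{r}}\times\mathbf{b}-(\widehat{\mathbf{r}}\cdot\mathbf{b})\,\widehat{\mathbf{r}}\times\mathbf{a}$, and after applying antisymmetry this becomes
\[
\widehat{\mathbf{r}}\times\bigl(\left(\mathbf{a}\times\mathbf{b}\right)\times\widehat{\mathbf{r}}\bigr)=-\left(\widehat{\mathbf{r}}\cdot\mathbf{a}\right)\mathbf{b}\times\widehat{\mathbf{r}}+\left(\widehat{\mathbf{r}}\cdot\mathbf{b}\right)\mathbf{a}\times\widehat{\mathbf{r}},
\]
the negative of what you wrote: you replaced $\widehat{\mathbf{r}}\times\mathbf{b}$ by $\mathbf{b}\times\widehat{\mathbf{r}}$ and $\widehat{\mathbf{r}}\times\mathbf{a}$ by $\mathbf{a}\times\widehat{\mathbf{r}}$ without changing the signs. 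As written, equating your two displays yields $\bigl((\mathbf{a}\times\mathbf{b})\cdot\widehat{\mathbf{r}}\bigr)\widehat{\mathbf{r}}=\mathbf{a}\times\mathbf{b}-(\widehat{\mathbf{r}}\cdot\mathbf{a})\,\mathbf{b}\times\widehat{\mathbf{r}}+(\widehat{\mathbf{r}}\cdot\mathbf{b})\,\mathbf{a}\times\widehat{\mathbf{r}}$, which is not the lemma --- test it with $\mathbf{a}=\mathbf{i}$, $\mathbf{b}=\mathbf{j}$, $\widehat{\mathbf{r}}=\mathbf{i}$, where the lemma gives $\mathbf{0}=\mathbf{0}$ but your version gives $\mathbf{0}=2\mathbf{k}$. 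With the corrected sign, equating the two computations does deliver the stated identity, so the argument is a one-line fix away from being complete.
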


\begin{proof}
Fixing arbitrarily $\widehat{\mathbf{r}}=\left(  \widehat{r}_{1},\widehat
{r}_{2},\widehat{r}_{3}\right)  $ with $\left(  \widehat{r}_{1}\right)
^{2}+\left(  \widehat{r}_{2}\right)  ^{2}+\left(  \widehat{r}_{3}\right)
^{2}=1$, both the left-hand and the right-hand of the above formula can be
regarded as functions of $\left(  \mathbf{a},\mathbf{b}\right)  =\left(
\left(  \widehat{a}_{1},\widehat{a}_{2},\widehat{a}_{3}\right)  ,\left(
\widehat{b}_{1},\widehat{b}_{2},\widehat{b}_{3}\right)  \right)  \in
\mathbb{R}^{3}\times\mathbb{R}^{3}$. It is easy to see that both functions are
bilinear, so that it suffices to show the above formula in cases of
$\mathbf{a}=\mathbf{i},\mathbf{j},\mathbf{k}$ and $\mathbf{b}=\mathbf{i}%
,\mathbf{j},\mathbf{k}$, where $\mathbf{i},\mathbf{j},\mathbf{k}$ are the
standard base of $\mathbb{R}^{3}$, namely, $\mathbf{i}=\left(  1,0,0\right)
$, $\mathbf{j}=\left(  0,1,0\right)  $ and $\mathbf{k}=\left(  0,0,1\right)
$. In case of $\mathbf{a}=\mathbf{b}$, it is easy to see that both sides
degenerate into $\mathbf{0}$. In case of $\mathbf{a}=\mathbf{i}$ and
$\mathbf{b}=\mathbf{j}$, we have $\mathbf{a}\times\mathbf{b}=\mathbf{k}$, so
that the left-hand is $\left(  \widehat{r}_{3}\widehat{r}_{1},\widehat{r}%
_{3}\widehat{r}_{2},\left(  \widehat{r}_{3}\right)  ^{2}\right)  $, while the
right-hand is
\begin{align*}
&  \left(  0,0,1\right)  +\widehat{r}_{1}\left(  \mathbf{j}\times
\widehat{\mathbf{r}}\right)  -\widehat{r}_{2}\left(  \mathbf{i}\times
\widehat{\mathbf{r}}\right) \\
&  =\left(  0,0,1\right)  +\left(  \widehat{r}_{1}\widehat{r}_{3},0,-\left(
\widehat{r}_{1}\right)  ^{2}\right)  -\left(  0,-\widehat{r}_{2}\widehat
{r}_{3},\left(  \widehat{r}_{2}\right)  ^{2}\right) \\
&  =\left(  \widehat{r}_{3}\widehat{r}_{1},\widehat{r}_{3}\widehat{r}%
_{2},\left(  \widehat{r}_{3}\right)  ^{2}\right) \\
&  \text{[since }\left(  \widehat{r}_{1}\right)  ^{2}+\left(  \widehat{r}%
_{2}\right)  ^{2}+\left(  \widehat{r}_{3}\right)  ^{2}\text{ is equal to
}1\text{]}%
\end{align*}
The remaining five cases are safely left to the reader.
\end{proof}

\begin{theorem}
\label{t4.2}(The Infinitesimal Similitude) Let $d,e\in D$ and $\mathbf{x}%
,\mathbf{a},\mathbf{b}\in\mathbb{R}^{3}$ with $\mathbf{a}\times\mathbf{b}%
\neq\mathbf{0}$. Let $C$ be the infinitesimal oriented curve moving from
$\mathbf{x}$\ to $\mathbf{x}+d\mathbf{a}$ by $d\mathbf{a}$, moving from
$\mathbf{x}+d\mathbf{a}$ to $\mathbf{x}+d\mathbf{a}+e\mathbf{b}$\ by
$e\mathbf{b}$, moving from $\mathbf{x}+d\mathbf{a}+e\mathbf{b}$\ to
$\mathbf{x}+e\mathbf{b}$\ by $-d\mathbf{a}$\ and finally moving from
$\mathbf{x}+e\mathbf{b}$\ to the start $\mathbf{x}$\ by $-e\mathbf{b}$.
\[%
\begin{array}
[c]{ccc}%
\mathbf{x} &
\begin{array}
[c]{c}%
-e\mathbf{b}\\
\longleftarrow
\end{array}
& \mathbf{x}+e\mathbf{b}\\%
\begin{array}
[c]{cc}%
d\mathbf{a} & \downarrow
\end{array}
& S &
\begin{array}
[c]{cc}%
\uparrow & -d\mathbf{a}%
\end{array}
\\
\mathbf{x}+d\mathbf{a} &
\begin{array}
[c]{c}%
\rightarrow\\
e\mathbf{b}%
\end{array}
& \mathbf{x}+d\mathbf{a}+e\mathbf{b}%
\end{array}
\]
Let $S$\ be the infinitesimal oriented parallelogram spanned by $\mathbf{x}$,
$\mathbf{x}+d\mathbf{a}$\ and $\mathbf{x}+e\mathbf{b}$\ with its induced
oriented boundary $C$. Let $h\in D$ and $\sigma\in\mathbb{R}$. Then we have
\[
\mathbf{E}_{\left(  S,1,h\right)  }^{\mathrm{dp}}\left(  \mathbf{r}\right)
=h\mathbf{B}_{C}\left(  \mathbf{r}\right)
\]
for any $\mathbf{r}\in\mathbb{R}^{3}$ with $\mathbf{r}\neq\mathbf{x}$.
\end{theorem}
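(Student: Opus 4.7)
The plan is to compute $\mathbf{B}_{C}(\mathbf{r})$ directly from Biot-Savart (\ref{2.2.1}) applied to the four infinitesimal edges of $C$, and to match the resulting expression, after multiplication by $h$, against the closed form of $\mathbf{E}_{(S,1,h)}^{\mathrm{dp}}(\mathbf{r})$ supplied by Proposition \ref{t3.2} with $\sigma=1$. Lemma \ref{t4.1} will supply the algebraic bridge between the two forms.

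The first step is to unfold the Biot-Savart integral as a sum of four terms $T_{1}+T_{2}+T_{3}+T_{4}$, one per edge. On an infinitesimal segment of tangent vector $d\mathbf{c}$ with $d\in D$, a line integral collapses to the integrand evaluated at the starting corner times $d\mathbf{c}$, since higher-order contributions in $d$ vanish by nilpotency. The four base-point/tangent pairs read off the diagram are $(\mathbf{x},d\mathbf{a})$, $(\mathbf{x}+d\mathbf{a},e\mathbf{b})$, $(\mathbf{x}+d\mathbf{a}+e\mathbf{b},-d\mathbf{a})$, $(\mathbf{x}+e\mathbf{b},-e\mathbf{b})$. Each numerator is simplified modulo $d^{2}=e^{2}=0$: self-torsion pieces like $(d\mathbf{a})\times(d\mathbf{a})$ disappear, leaving only the principal parts $\pm(\mathbf{r}-\mathbf{x})\times d\mathbf{a}$, $\pm(\mathbf{r}-\mathbf{x})\times e\mathbf{b}$, together with mixed $\pm de\,(\mathbf{a}\times\mathbf{b})$ contributions from $T_{2}$ and $T_{3}$. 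Each denominator is expanded by Corollary \ref{t3.1.1}, used once on $T_{2}$ and $T_{4}$ and twice on the diagonal corner $T_{3}$, together with the obvious $\|\cdot\|^{-5}$ analogue that is a direct consequence of Lemma \ref{t3.1}.

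The heart of the calculation is the summation $T_{1}+T_{2}+T_{3}+T_{4}$ modulo $d^{2}=e^{2}=0$. The principal-numerator $\|\mathbf{r}-\mathbf{x}\|^{-3}$-pieces cancel pairwise between a segment and its reverse ($T_{1}$ against $T_{3}$, and $T_{2}$ against $T_{4}$), so only $de$-coefficients survive. These are of three shapes, involving $\mathbf{a}\times\mathbf{b}$, $((\mathbf{r}-\mathbf{x})\cdot\mathbf{a})\,\mathbf{b}\times(\mathbf{r}-\mathbf{x})$, and $((\mathbf{r}-\mathbf{x})\cdot\mathbf{b})\,\mathbf{a}\times(\mathbf{r}-\mathbf{x})$, divided by suitable powers of $\|\mathbf{r}-\mathbf{x}\|$. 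Writing $\widehat{\mathbf{u}}=(\mathbf{r}-\mathbf{x})/\|\mathbf{r}-\mathbf{x}\|$ and applying Lemma \ref{t4.1} to $\widehat{\mathbf{u}}$ collapses these three contributions into a single expression proportional to $3(\widehat{\mathbf{u}}\cdot(\mathbf{a}\times\mathbf{b}))\widehat{\mathbf{u}}-(\mathbf{a}\times\mathbf{b})$ divided by $\|\mathbf{r}-\mathbf{x}\|^{3}$, which is exactly the shape of the right-hand side of Proposition \ref{t3.2} with $\sigma=1$; multiplication by $h$ then yields the stated equality.

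The main obstacle is the bookkeeping of the diagonal term $T_{3}$, whose denominator requires a genuine two-variable expansion: first-order in $d$ (using Corollary \ref{t3.1.1} for $\|\cdot\|^{-3}$ and its $\|\cdot\|^{-5}$ analogue), then first-order in $e$. Several of the resulting $de$-cross terms are needed precisely to cancel the spurious first-order-in-$d$ or first-order-in-$e$ pieces coming from $T_{2}$ and $T_{4}$ before Lemma \ref{t4.1} can be invoked, so the arithmetic must be organised carefully, and the numerical constants (in particular, the coefficient of $\mathbf{a}\times\mathbf{b}$) must be tracked scrupulously in order to reproduce the exact formula of Proposition \ref{t3.2}.
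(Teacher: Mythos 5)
Your proposal is correct and follows essentially the same route as the paper: edge-by-edge evaluation of the Biot--Savart sum over the four infinitesimal segments, expansion of the denominators via Corollary \ref{t3.1.1} modulo $d^{2}=e^{2}=0$, pairwise cancellation of the principal parts between each segment and its reverse, and a final appeal to Lemma \ref{t4.1} to match the dipole formula of Proposition \ref{t3.2}. The only (harmless) difference is that the paper first replaces the base points of the third and fourth edges by $\mathbf{x}+e\mathbf{b}$ and $\mathbf{x}$ respectively, because the discarded displacement is annihilated by the nilpotent factor already present in the tangent vector, so the two-variable expansion of your diagonal term $T_{3}$ and the $\left\Vert \cdot\right\Vert ^{-5}$ analogue of Corollary \ref{t3.1.1} are never actually needed.
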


\begin{proof}
On the one hand, thanks to Proposition \ref{t3.2}, we have
\[
\mathbf{E}_{\left(  S,1,h\right)  }^{\mathrm{dp}}\left(  \mathbf{r}\right)
=\frac{hde}{\left\Vert \mathbf{r}-\mathbf{x}\right\Vert ^{3}}\left(  3\left(
\frac{\mathbf{r}-\mathbf{x}}{\left\Vert \mathbf{r}-\mathbf{x}\right\Vert
}\cdot\left(  \mathbf{a}\times\mathbf{b}\right)  \right)  \frac{\mathbf{r}%
-\mathbf{x}}{\left\Vert \mathbf{r}-\mathbf{x}\right\Vert }-\left(
\mathbf{a}\times\mathbf{b}\right)  \right)
\]
On the other hand, we have
\begin{align*}
\mathbf{B}_{C}\left(  \mathbf{r}\right)   &  =\frac{d\mathbf{a}\times\left(
\mathbf{r}-\mathbf{x}\right)  }{\left\Vert \mathbf{r}-\mathbf{x}\right\Vert
^{3}}+\frac{e\mathbf{b}\times\left(  \mathbf{r}-\left(  \mathbf{x}%
+d\mathbf{a}\right)  \right)  }{\left\Vert \mathbf{r}-\left(  \mathbf{x}%
+d\mathbf{a}\right)  \right\Vert ^{3}}-\frac{d\mathbf{a}\times\left(
\mathbf{r}-\left(  \mathbf{x}+e\mathbf{b}\right)  \right)  }{\left\Vert
\mathbf{r}-\left(  \mathbf{x}+e\mathbf{b}\right)  \right\Vert ^{3}}%
-\frac{e\mathbf{b}\times\left(  \mathbf{r}-\mathbf{x}\right)  }{\left\Vert
\mathbf{r}-\mathbf{x}\right\Vert ^{3}}\\
&  =\frac{d\mathbf{a}\times\left(  \mathbf{r}-\mathbf{x}\right)  }{\left\Vert
\mathbf{r}-\mathbf{x}\right\Vert ^{3}}+\frac{e\mathbf{b}\times\left(  \left(
\mathbf{r}-\mathbf{x}\right)  -d\mathbf{a}\right)  }{\left\Vert \left(
\mathbf{r}-\mathbf{x}\right)  -d\mathbf{a}\right\Vert ^{3}}-\frac
{d\mathbf{a}\times\left(  \left(  \mathbf{r}-\mathbf{x}\right)  -e\mathbf{b}%
\right)  }{\left\Vert \left(  \mathbf{r}-\mathbf{x}\right)  -e\mathbf{b}%
\right\Vert ^{3}}-\frac{e\mathbf{b}\times\left(  \mathbf{r}-\mathbf{x}\right)
}{\left\Vert \mathbf{r}-\mathbf{x}\right\Vert ^{3}}\\
&  =\frac{d\mathbf{a}\times\left(  \mathbf{r}-\mathbf{x}\right)  }{\left\Vert
\mathbf{r}-\mathbf{x}\right\Vert ^{3}}+\left(  e\mathbf{b}\times\left(
\left(  \mathbf{r}-\mathbf{x}\right)  -d\mathbf{a}\right)  \right)  \left(
\left\Vert \mathbf{r}-\mathbf{x}\right\Vert ^{-3}+3\left\Vert \mathbf{r}%
-\mathbf{x}\right\Vert ^{-5}\left(  \left(  \mathbf{r}-\mathbf{x}\right)
\cdot\mathbf{a}\right)  d\right)  -\\
&  \left(  d\mathbf{a}\times\left(  \left(  \mathbf{r}-\mathbf{x}\right)
-e\mathbf{b}\right)  \right)  \left(  \left\Vert \mathbf{r}-\mathbf{x}%
\right\Vert ^{-3}+3\left\Vert \mathbf{r}-\mathbf{x}\right\Vert ^{-5}\left(
\left(  \mathbf{r}-\mathbf{x}\right)  \cdot\mathbf{b}\right)  e\right)
-\frac{e\mathbf{b}\times\left(  \mathbf{r}-\mathbf{x}\right)  }{\left\Vert
\mathbf{r}-\mathbf{x}\right\Vert ^{3}}\\
&  =\left\{  \left(  e\mathbf{b}\times\left(  \left(  \mathbf{r}%
-\mathbf{x}\right)  -d\mathbf{a}\right)  \right)  \left(  \left\Vert
\mathbf{r}-\mathbf{x}\right\Vert ^{-3}+3\left\Vert \mathbf{r}-\mathbf{x}%
\right\Vert ^{-5}\left(  \left(  \mathbf{r}-\mathbf{x}\right)  \cdot
\mathbf{a}\right)  d\right)  -\frac{e\mathbf{b}\times\left(  \mathbf{r}%
-\mathbf{x}\right)  }{\left\Vert \mathbf{r}-\mathbf{x}\right\Vert ^{3}%
}\right\}  -\\
&  \left\{  \left(  d\mathbf{a}\times\left(  \left(  \mathbf{r}-\mathbf{x}%
\right)  -e\mathbf{b}\right)  \right)  \left(  \left\Vert \mathbf{r}%
-\mathbf{x}\right\Vert ^{-3}+3\left\Vert \mathbf{r}-\mathbf{x}\right\Vert
^{-5}\left(  \left(  \mathbf{r}-\mathbf{x}\right)  \cdot\mathbf{b}\right)
e\right)  -\frac{d\mathbf{a}\times\left(  \mathbf{r}-\mathbf{x}\right)
}{\left\Vert \mathbf{r}-\mathbf{x}\right\Vert ^{3}}\right\}  \\
&  =\left\{  -de\left\Vert \mathbf{r}-\mathbf{x}\right\Vert ^{-3}\left(
\mathbf{b}\times\mathbf{a}\right)  +3de\left\Vert \mathbf{r}-\mathbf{x}%
\right\Vert ^{-5}\left(  \left(  \mathbf{r}-\mathbf{x}\right)  \cdot
\mathbf{a}\right)  \left(  \mathbf{b}\times\left(  \mathbf{r}-\mathbf{x}%
\right)  \right)  \right\}  -\\
&  \left\{  -de\left\Vert \mathbf{r}-\mathbf{x}\right\Vert ^{-3}\left(
\mathbf{a}\times\mathbf{b}\right)  +3de\left\Vert \mathbf{r}-\mathbf{x}%
\right\Vert ^{-5}\left(  \left(  \mathbf{r}-\mathbf{x}\right)  \cdot
\mathbf{b}\right)  \left(  \mathbf{a}\times\left(  \mathbf{r}-\mathbf{x}%
\right)  \right)  \right\}  \\
&  =de\left\Vert \mathbf{r}-\mathbf{x}\right\Vert ^{-3}\left\{
\begin{array}
[c]{c}%
2\left(  \mathbf{a}\times\mathbf{b}\right)  +\\
3\left(  \frac{\mathbf{r}-\mathbf{x}}{\left\Vert \mathbf{r}-\mathbf{x}%
\right\Vert }\cdot\mathbf{a}\right)  \left(  \mathbf{b}\times\frac
{\mathbf{r}-\mathbf{x}}{\left\Vert \mathbf{r}-\mathbf{x}\right\Vert }\right)
-3\left(  \frac{\mathbf{r}-\mathbf{x}}{\left\Vert \mathbf{r}-\mathbf{x}%
\right\Vert }\cdot\mathbf{b}\right)  \left(  \mathbf{a}\times\frac
{\mathbf{r}-\mathbf{x}}{\left\Vert \mathbf{r}-\mathbf{x}\right\Vert }\right)
\end{array}
\right\}
\end{align*}
Therefore the desired result follows by dint of Lemma \ref{t4.1}.
\end{proof}

\begin{theorem}
\label{t4.3}(The General Similitude) Let $S$ be an oriented surface with its
induced oriented boundary $C$. Let $h\in D$. Then we have
\[
\mathbf{E}_{\left(  S,1,h\right)  }^{\mathrm{dp}}\left(  \mathbf{r}\right)
=h\mathbf{B}_{C}\left(  \mathbf{r}\right)
\]
for any $\mathbf{r}\in\mathbb{R}^{3}$ with $\mathbf{r}\notin S$.
\end{theorem}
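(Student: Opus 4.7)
The plan is to bootstrap from the infinitesimal case already established in Theorem~\ref{t4.2} to a general oriented surface $S$ by a synthetic Stokes-type argument, very much in the spirit of the treatments of Stokes' theorem in \cite{ni1}, \cite{ni2} and \cite{ni3}, to which the opening sentence of this section explicitly alludes. The core intuition is that if one partitions $S$ into a mesh of infinitesimal parallelograms, then Theorem~\ref{t4.2} applies cell by cell, and after summation the interior edges cancel on the magnetic side while the dipole contributions on the electric side simply add up.

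First I would fix an orientation-preserving parametrization of $S$ and partition its parameter domain into infinitesimal rectangles, so that each cell $S_{ij}$ is, at a base point $\mathbf{x}_{ij}$, spanned by two infinitesimal edge vectors $d_{i}\mathbf{a}_{ij}$ and $e_{j}\mathbf{b}_{ij}$ with $d_{i},e_{j}\in D$. The induced oriented boundary of each cell is precisely a closed loop $C_{ij}$ of the kind treated in Theorem~\ref{t4.2}. Then for each cell one has the identity
\[
\mathbf{E}_{(S_{ij},1,h)}^{\mathrm{dp}}(\mathbf{r}) = h\,\mathbf{B}_{C_{ij}}(\mathbf{r}),
\]
valid because $\mathbf{r}\notin S$ guarantees that every distance $\|\mathbf{r}-\mathbf{x}_{ij}\|$ is invertible and every application of Corollary~\ref{t3.1.1} is legitimate.

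Second, I would aggregate over all cells. The left-hand side assembles, by the linearity of the Coulomb law in the charge distribution, into $\mathbf{E}_{(S,1,h)}^{\mathrm{dp}}(\mathbf{r})$, since the dipole charge distributions on $S_{\pm h/2}$ are exactly the superposition of those on the shifted infinitesimal cells $S_{ij,\pm h/2}$. On the right-hand side, the sum $\sum_{i,j}\mathbf{B}_{C_{ij}}(\mathbf{r})$ telescopes: every interior edge of the mesh is traversed twice, once by each of the two cells meeting along it, with opposite orientations. Because the Biot--Savart integrand is odd in $d\mathbf{r}$, the paired contributions cancel, and only the edges lying on $\partial S = C$ survive, summing to $\mathbf{B}_C(\mathbf{r})$.

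The principal obstacle---and the reason the synthetic framework is essential here---is the rigorous justification of the interior-edge cancellation. Two cells sharing a common edge produce Biot--Savart contributions that are computed from base points separated by a first-order infinitesimal, so the cancellation is not immediate: it relies on showing that the infinitesimal base-point discrepancies are themselves absorbed into neighboring cells to the order of $D$-infinitesimals retained. This is precisely the mechanism worked out in the author's synthetic Stokes papers \cite{ni1}, \cite{ni2} and \cite{ni3}, and I would appeal to that machinery rather than unwinding the combinatorics anew. Once the interior-edge cancellation is granted, the conclusion $\mathbf{E}_{(S,1,h)}^{\mathrm{dp}}(\mathbf{r}) = h\,\mathbf{B}_{C}(\mathbf{r})$ follows immediately by summing the cellwise identities.
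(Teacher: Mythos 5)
Your proposal is correct and follows essentially the same route as the paper: decompose $S$ into infinitesimal oriented parallelograms, apply the infinitesimal similitude cell by cell, sum the dipole fields by linearity of the Coulomb law, and cancel the interior edges of the mesh on the Biot--Savart side so that only the boundary $C$ survives. The paper carries out exactly this argument (its equations (\ref{4.3.1})--(\ref{4.3.3})), asserting the interior-edge cancellation directly since each shared segment is traversed twice with opposite orientations.
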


\begin{proof}
We divide the oriented surface $S$ into $MN$ infinitesimal oriented
parallelograms, where $M$ and $N$ are very great natural numbers. It is
depicted partially and schematically in the following diagram:
\[%
\begin{array}
[c]{ccccc}%
\mathbf{x}_{i,j} & \leftarrow & \mathbf{x}_{i,j+1} & \leftarrow &
\mathbf{x}_{i,j+2}\\
\downarrow & S_{i,j} &
\begin{array}
[c]{cc}%
\uparrow & \downarrow
\end{array}
& S_{i,j+1} & \uparrow\\
\mathbf{x}_{i+1,j} &
\begin{array}
[c]{c}%
\rightarrow\\
\leftarrow
\end{array}
& \mathbf{x}_{i+1,j+1} &
\begin{array}
[c]{c}%
\rightarrow\\
\leftarrow
\end{array}
& \mathbf{x}_{i+1,j+2}\\
\downarrow & S_{i+1,j} &
\begin{array}
[c]{cc}%
\uparrow & \downarrow
\end{array}
& S_{i+1,j+1} & \uparrow\\
\mathbf{x}_{i+2,j} & \rightarrow & \mathbf{x}_{i+2,j+1} & \rightarrow &
\mathbf{x}_{i+2,j+2}%
\end{array}
\]
Then surely we have
\begin{equation}
\mathbf{E}_{\left(  S,1,h\right)  }^{\mathrm{dp}}\left(  \mathbf{r}\right)
=\sum_{i=0}^{M-1}\sum_{j=0}^{N-1}\mathbf{E}_{\left(  S_{i,j},1,h\right)
}^{\mathrm{dp}}\left(  \mathbf{r}\right) \label{4.3.1}%
\end{equation}
Proposition \ref{t3.2} enables us to conclude that
\begin{equation}
\sum_{i=0}^{M-1}\sum_{j=0}^{N-1}\mathbf{E}_{\left(  S_{i,j},1,h\right)
}^{\mathrm{dp}}\left(  \mathbf{r}\right)  =h\sum_{i=0}^{M-1}\sum_{j=0}%
^{N-1}\mathbf{B}_{C_{i,j}}\left(  \mathbf{r}\right) \label{4.3.2}%
\end{equation}
The boundary $C_{i,j}$ of the infinitesimal parallelogram $S_{i,j}$ consists
of the infinitesimal segment from $\mathbf{x}_{i,j}$ to $\mathbf{x}_{i+1,j}$,
that from $\mathbf{x}_{i+1,j}$\ to $\mathbf{x}_{i+1,j+1}$, that from
$\mathbf{x}_{i+1,j+1}$ to $\mathbf{x}_{i,j+1}$ and that from $\mathbf{x}%
_{i,j+1}$ to $\mathbf{x}_{i,j}$. Unless $i=M-1$, the second segment from
$\mathbf{x}_{i+1,j}$\ to $\mathbf{x}_{i+1,j+1}$ is shared by the infinitesimal
parallelogram $S_{i+1,j}$ as its boundary in the opposite direction.
Similarly, unless $j=N-1$, the third segment from $\mathbf{x}_{i+1,j+1}$ to
$\mathbf{x}_{i,j+1}$ is shared by the infinitesimal parallelogram $S_{i,j+1}$
as its boundary in the opposite direction. Therefore we have
\begin{equation}
\sum_{i=0}^{M-1}\sum_{j=0}^{N-1}\mathbf{B}_{C_{i,j}}\left(  \mathbf{r}\right)
=\mathbf{B}_{C}\left(  \mathbf{r}\right) \label{4.3.3}%
\end{equation}
Therefore the desired formula follows readily from (\ref{4.3.1}),
(\ref{4.3.2}) and (\ref{4.3.3}).
\end{proof}

\begin{corollary}
\label{t4.3.1}With the same notation and assumptions in the above theorem, we
have
\[
\left(  \mathrm{rot}\,\mathbf{B}_{C}\right)  \left(  \mathbf{r}\right)
=\mathbf{0}%
\]

\end{corollary}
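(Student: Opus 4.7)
The plan is to leverage Theorem \ref{t4.3}, which identifies $h\mathbf{B}_C(\mathbf{r})$ with the dipole-sheet electric field $\mathbf{E}^{\mathrm{dp}}_{(S,1,h)}(\mathbf{r})$, and then to exploit the fact that the right-hand side is a \emph{purely electrostatic} field built from the Coulomb law, which is irrotational by Maxwell's equation (\ref{2.1.2}). This reduces a magnetic statement to an electric one already known from \S\ref{s2.1}, which is exactly the kind of transfer that the whole paper is advertising.

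Concretely, I would first note that by construction $\mathbf{E}^{\mathrm{dp}}_{(S,1,h)} = \mathbf{E}^{+}_{(S,1,h)} + \mathbf{E}^{-}_{(S,1,h)}$, where each summand is a Coulomb field $\mathbf{E}_{(\Omega,q)}$ in the sense of \S\ref{s2.1} (with $\Omega = S_{\pm h/2}$ and $q = \pm 1$). Applying (\ref{2.1.2}) to each summand and using linearity of $\mathrm{rot}$, I get
\[
\mathrm{rot}\,\mathbf{E}^{\mathrm{dp}}_{(S,1,h)}(\mathbf{r}) = \mathbf{0}.
\]
Next, by Theorem \ref{t4.3}, for every $h\in D$ we have $\mathbf{E}^{\mathrm{dp}}_{(S,1,h)}(\mathbf{r}) = h\,\mathbf{B}_C(\mathbf{r})$, and since $h$ is a scalar constant with respect to $\mathbf{r}$, the rotation pulls it out:
\[
h\,(\mathrm{rot}\,\mathbf{B}_C)(\mathbf{r}) = \mathrm{rot}\bigl(h\,\mathbf{B}_C\bigr)(\mathbf{r}) = \mathrm{rot}\,\mathbf{E}^{\mathrm{dp}}_{(S,1,h)}(\mathbf{r}) = \mathbf{0}
\]
for every $h\in D$.

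Finally I would invoke the Kock--Lawvere axiom (applied componentwise to the vector $(\mathrm{rot}\,\mathbf{B}_C)(\mathbf{r})\in\mathbb{R}^3$): the map $h\in D\mapsto h\,(\mathrm{rot}\,\mathbf{B}_C)(\mathbf{r})\in\mathbb{R}^3$ is linear in $h$ with unique slope $(\mathrm{rot}\,\mathbf{B}_C)(\mathbf{r})$, and its identically vanishing forces this slope to be $\mathbf{0}$. This yields the desired conclusion $(\mathrm{rot}\,\mathbf{B}_C)(\mathbf{r})=\mathbf{0}$.

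The only real subtlety — and the step that deserves a sentence of justification rather than being glossed over — is the cancellation $h\mathbf{v}=\mathbf{0}\ (\forall h\in D)\Rightarrow \mathbf{v}=\mathbf{0}$, since a naive reader might worry that multiplying by a nilpotent destroys information. Everything else is bookkeeping: verifying that $\mathrm{rot}$ commutes with the scalar $h$, and that the hypothesis $\mathbf{r}\notin S$ keeps us away from the singular locus of the Coulomb kernels on $S_{\pm h/2}$ so that (\ref{2.1.2}) legitimately applies.
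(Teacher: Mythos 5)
Your argument is essentially identical to the paper's own proof: apply Theorem \ref{t4.3} to get $h(\mathrm{rot}\,\mathbf{B}_C)(\mathbf{r}) = (\mathrm{rot}\,\mathbf{E}^{\mathrm{dp}}_{(S,1,h)})(\mathbf{r}) = \mathbf{0}$ by (\ref{2.1.2}), then cancel the universally quantified $h \in D$. The only difference is that you spell out the decomposition into $\mathbf{E}^{+}+\mathbf{E}^{-}$ and the Kock--Lawvere justification of the cancellation, both of which the paper leaves implicit; your version is correct.
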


\begin{proof}
We have
\begin{align*}
&  h\left(  \mathrm{rot}\,\mathbf{B}_{C}\right)  \left(  \mathbf{r}\right) \\
&  =\left(  \mathrm{rot}\,\mathbf{E}_{\left(  S,1,h\right)  }^{\mathrm{dp}%
}\right)  \left(  \mathbf{r}\right) \\
&  \text{\lbrack By Theorem \ref{t4.3}]}\\
&  =\mathbf{0}\\
&  \text{\lbrack By (\ref{2.1.2})]}%
\end{align*}
for any $h\in D$, so that we have
\[
\left(  \mathrm{rot}\,\mathbf{B}_{C}\right)  \left(  \mathbf{r}\right)
=\mathbf{0}%
\]

\end{proof}

\section{\label{s5}From the Biot-Savart Law to Amp\`{e}re's Circuital Law}

This section owes much to \cite{fu}.

\begin{proposition}
\label{t5.1}For any $\mathbf{x}\notin C$, we have
\[
\left(  \mathrm{rot}\,\mathbf{B}_{C}\right)  \left(  \mathbf{x}\right)
=\mathbf{0}%
\]

\end{proposition}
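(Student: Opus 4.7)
The plan is to reduce Proposition \ref{t5.1} to Corollary \ref{t4.3.1}. Note that $\mathbf{B}_{C}$ depends only on the loop $C$, not on any bounding surface, whereas Corollary \ref{t4.3.1} yields the vanishing of $\mathrm{rot}\,\mathbf{B}_{C}$ at every point $\mathbf{r}$ that lies off \emph{some} oriented surface $S$ with induced oriented boundary $C$. So the whole content of Proposition \ref{t5.1} is the geometric/topological assertion that, given any point $\mathbf{x}\in\mathbb{R}^{3}$ not on the loop $C$, one can always produce an oriented surface $S$ with $\partial S=C$ avoiding $\mathbf{x}$.

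First I would invoke the existence of at least one oriented surface $S_{0}$ with induced oriented boundary $C$ (a Seifert-type surface; since $C$ is a smooth loop in $\mathbb{R}^{3}$ such an $S_{0}$ is guaranteed by classical topology). If $\mathbf{x}\notin S_{0}$, we are done by a direct appeal to Corollary \ref{t4.3.1}. Otherwise $\mathbf{x}$ lies in the interior of $S_{0}$ (not on its boundary, since $\partial S_{0}=C$ and $\mathbf{x}\notin C$). In that case I would remove from $S_{0}$ a small open disk $\Delta$ around $\mathbf{x}$ and glue in its place a small hemispherical cap $H\subset\mathbb{R}^{3}$ having the same boundary $\partial\Delta$ but bulging to one side of the tangent plane of $S_{0}$ at $\mathbf{x}$, oriented compatibly. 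The resulting surface
\[
S=\left(S_{0}\setminus\Delta\right)\cup H
\]
is oriented, still has induced oriented boundary $C$, and by construction $\mathbf{x}\notin S$. Applying Corollary \ref{t4.3.1} to this $S$ then yields $(\mathrm{rot}\,\mathbf{B}_{C})(\mathbf{x})=\mathbf{0}$.

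The main obstacle is entirely the topological construction of $S$; the analysis is packaged into the earlier machinery. One has to make sure the modification near $\mathbf{x}$ can be done compatibly with the orientation and regularity assumed in Theorem \ref{t4.3} (so that the subdivision into infinitesimal oriented parallelograms used in its proof still applies). This is routine: the disk $\Delta$ can be chosen as small as desired, disjoint from $C$, and the cap $H$ glued smoothly. In practice, for the purposes of this paper one can even pick $S_{0}$ cleverly from the start (for instance, if $C$ is generic, any planar Seifert-style surface missing $\mathbf{x}$ suffices) so that the surgery step is never actually required.
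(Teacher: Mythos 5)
Your proposal follows essentially the same route as the paper: the paper's proof likewise reduces the claim to Corollary \ref{t4.3.1} by observing that, since $\mathbf{x}\notin C$, it is not difficult to find an oriented surface $S$ with boundary $C$ dodging $\mathbf{x}$ (your surgery on a Seifert surface merely makes this explicit). The one ingredient you omit is the paper's transfer step --- the classical entities must be internalized into the well-adapted model before Corollary \ref{t4.3.1} (a synthetic statement proved with nilpotent infinitesimals) can be applied, and the conclusion then externalized back to the classical setting --- which is precisely the mechanism the paper invokes in its one-sentence proof.
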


\begin{proof}
Since $\mathbf{x}\notin C$, it is not difficult to find a surface $S$\ dodging
$\mathbf{x}$ with its boundary being $C$. By internalizing these entities in a
well-adapted model and externalizing Corollary \ref{t4.3.1}, we get the
desired result.
\end{proof}

\begin{proposition}
\label{t5.2}The number $\mathbf{A}\left(  C,L\right)  $ has the following properties:

\begin{enumerate}
\item It is symmetric in the sense that
\[
\mathbf{A}\left(  C,L\right)  =\mathbf{A}\left(  L,C\right)
\]

\item For any oriented surface $S$ with $\partial S=L\cup-L^{\prime}$, if it
does not intersect $C$, then we have
\[
\mathbf{A}\left(  C,L\right)  =\mathbf{A}\left(  C,L^{\prime}\right)
\]

\end{enumerate}
\end{proposition}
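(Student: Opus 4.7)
The plan is to prove the two items separately, exploiting the observation that $\mathbf{A}(C,L)$ is, up to the factor $1/(4\pi)$ already absorbed in its definition, nothing but the circulation $\int_{L}\mathbf{B}_{C}\cdot d\mathbf{s}$ of the Biot--Savart field of $C$ around $L$.

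For part 1 (symmetry), I would show the stronger pointwise statement that the integrand itself is symmetric under the swap $(\mathbf{r},d\mathbf{r})\leftrightarrow(\mathbf{s},d\mathbf{s})$. Writing the scalar triple product as a determinant $[\mathbf{u},\mathbf{v},\mathbf{w}]$, the integrand on the $(C,L)$--side is $[d\mathbf{s},\mathbf{s}-\mathbf{r},d\mathbf{r}]$, while on the $(L,C)$--side it is $[d\mathbf{r},\mathbf{r}-\mathbf{s},d\mathbf{s}]$. Negating the middle column gives one sign change, and swapping the first and last columns gives another, so the two signs cancel and the two integrands agree. Fubini then delivers $\mathbf{A}(C,L)=\mathbf{A}(L,C)$. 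This part is essentially a one-line identity and presents no real difficulty.

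For part 2, I would reduce the claim to a rotation computation via Stokes' theorem. Since $C$ and $L,L'$ are disjoint, $\mathbf{B}_{C}$ is smooth on a neighbourhood of $S$ (by hypothesis $S\cap C=\varnothing$), so that
\[
\mathbf{A}(C,L)-\mathbf{A}(C,L')=\int_{L}\mathbf{B}_{C}\cdot d\mathbf{s}-\int_{L'}\mathbf{B}_{C}\cdot d\mathbf{s}=\int_{\partial S}\mathbf{B}_{C}\cdot d\mathbf{s}=\int_{S}(\mathrm{rot}\,\mathbf{B}_{C})\cdot d\mathbf{n},
\]
where the boundary orientation $\partial S=L\cup-L'$ matches exactly what we need. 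By Proposition \ref{t5.1}, the integrand $\mathrm{rot}\,\mathbf{B}_{C}$ vanishes identically on $S$, so the right-hand side is $\mathbf{0}$, establishing $\mathbf{A}(C,L)=\mathbf{A}(C,L')$.

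The only step that needs any care is the invocation of Stokes' theorem together with Proposition \ref{t5.1}: one must check that $\mathbf{B}_{C}$ is a bona fide smooth vector field on an open neighbourhood of $S$, which follows from $S\cap C=\varnothing$ and the explicit integral formula (\ref{2.2.1}), whose integrand is smooth in $\mathbf{x}$ as long as $\mathbf{x}$ stays off $C$. Once that is noted, the classical Stokes' theorem applies verbatim and the proof is complete. I do not expect any genuine obstacle here; the work has already been done in establishing $\mathrm{rot}\,\mathbf{B}_{C}=\mathbf{0}$ in Proposition \ref{t5.1}.
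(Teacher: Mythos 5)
Your proposal is correct and follows essentially the same route as the paper: part 1 via the sign-cancelling determinant identity for the scalar triple product in the integrand, and part 2 via Stokes' theorem applied to $\mathbf{B}_{C}$ over $S$ together with Proposition \ref{t5.1}. No substantive differences to report.
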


\begin{proof}
The first property follows simply from
\begin{align*}
&  \left(  \left(  \mathbf{l}\left(  s\right)  -\mathbf{m}\left(  t\right)
\right)  \times\frac{d\mathbf{m}}{dt}\left(  t\right)  \right)  \cdot
\frac{d\mathbf{l}}{ds}\left(  s\right) \\
&  =\det\,\left(
\begin{array}
[c]{c}%
\mathbf{l}\left(  s\right)  -\mathbf{m}\left(  t\right) \\
\frac{d\mathbf{m}}{dt}\left(  t\right) \\
\frac{d\mathbf{l}}{ds}\left(  s\right)
\end{array}
\right) \\
&  =\det\,\left(
\begin{array}
[c]{c}%
\mathbf{m}\left(  t\right)  -\mathbf{l}\left(  s\right) \\
\frac{d\mathbf{l}}{ds}\left(  s\right) \\
\frac{d\mathbf{m}}{dt}\left(  t\right)
\end{array}
\right) \\
&  =\left(  \left(  \mathbf{m}\left(  t\right)  -\mathbf{l}\left(  s\right)
\right)  \times\frac{d\mathbf{l}}{ds}\left(  s\right)  \right)  \cdot
\frac{d\mathbf{m}}{dt}\left(  t\right)
\end{align*}
The second property follows simply from Stokes' theorem, as is seen in the
following computation:
\begin{align*}
&  \mathbf{A}\left(  C,L\right)  -\mathbf{A}\left(  C,L^{\prime}\right) \\
&  =\frac{1}{4\pi}\int_{L\cup-L^{\prime}}\mathbf{B}_{L}\cdot d\mathbf{r}\\
&  =\frac{1}{4\pi}\int_{S}\left(  \mathrm{rot}\,\mathbf{B}_{C}\right)  \cdot
d\mathbf{S}\\
&  \text{[By Stokes' Theorem]}\\
&  =0\\
&  \text{[By Proposition \ref{t5.1}]}%
\end{align*}

\end{proof}

\begin{lemma}
\label{t5.3}Let $n$ be a natural number with $n\geq2$. The curve $L$ is the
unit circle on the $xy$ plane with center $\left(  0,0,0\right)  $ rounding
counterclockwise against the positive part of the $z$ axis. The curve $C_{n}$,
to begin with, moves up straight from $\left(  0,0,-n\right)  $ to $\left(
0,0,n\right)  $, moves horizontally from $\left(  0,0,n\right)  $ to $\left(
n,0,n\right)  $, moves down straight from $\left(  n,0,n\right)  $ to $\left(
n,0,-n\right)  $, and finally moves horizontally from $\left(  n,0,-n\right)
$ to $\left(  0,0,-n\right)  $.
\[%
\begin{array}
[c]{ccc}%
\left(  0,0,n\right)  & \rightarrow & \left(  n,0,n\right) \\
\uparrow &  & \downarrow\\
\circlearrowleft &  & \downarrow\\
\uparrow &  & \downarrow\\
\left(  0,0,-n\right)  & \leftarrow & \left(  n,0,-n\right)
\end{array}
\]
Then we have
\[
\mathbf{A}\left(  C_{n},L\right)  =1
\]
while trivially we have
\[
\mathbf{Lk}\left(  C_{n},L\right)  =1
\]

\end{lemma}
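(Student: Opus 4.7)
The identity $\mathbf{Lk}(C_n,L)=1$ follows directly from the definition. I take $S$ to be the closed unit disk in the $xy$-plane, whose induced boundary is $L$ (counterclockwise viewed from $+z$) and whose positive normal is $+\mathbf{k}$. The rectangle $C_n$ crosses the $xy$-plane precisely at $(0,0,0)$ and $(n,0,0)$; since $n\ge 2$, only the former lies in $S$. At $(0,0,0)$ the tangent of $C_n$ is $+\mathbf{k}$, transiting $S$ in the direction selected by its orientation, so $\varepsilon_{1}=+1$ and the linking number is $1$.

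For the evaluation of $\mathbf{A}(C_n,L)$, my plan is to apply Proposition~\ref{t5.2}(2) to replace $L$ by a small loop encircling the puncture point. Let $L_{\varepsilon}$ be the circle of radius $\varepsilon$ about the origin in the $xy$-plane, oriented counterclockwise from $+z$, and let $\Sigma_{\varepsilon}$ be the planar annulus in the $xy$-plane between $L_{\varepsilon}$ and $L$; oriented by $+\mathbf{k}$ it satisfies $\partial\Sigma_{\varepsilon}=L\cup-L_{\varepsilon}$. Because $C_n$ meets the $xy$-plane only at $(0,0,0)$ (interior to $L_{\varepsilon}$) and at $(n,0,0)$ (exterior to $L$), the annulus $\Sigma_{\varepsilon}$ is disjoint from $C_n$ for $0<\varepsilon<1$. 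Proposition~\ref{t5.2}(2) therefore yields
\[
\mathbf{A}(C_n,L)=\mathbf{A}(C_n,L_{\varepsilon})
\]
for every such $\varepsilon$, reducing the problem to the limit $\varepsilon\downarrow 0$.

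The next step is to decompose $\mathbf{A}(C_n,L_{\varepsilon})=\int_{L_{\varepsilon}}\mathbf{B}_{C_n}\cdot d\mathbf{l}$ according to the four straight segments of $C_n$. Three of these segments remain at distance at least $n\ge 2$ from the $z$-axis, so their Biot-Savart contributions are smooth vector fields on a neighbourhood of the origin and their circulations around the shrinking loop $L_{\varepsilon}$ tend to $0$. The surviving contribution comes from the vertical segment $\{(0,0,t):-n\le t\le n\}$, which behaves as a finite straight wire through the origin. A direct Biot-Savart calculation using the elementary antiderivative $\int du/(r^{2}+u^{2})^{3/2}=u/(r^{2}\sqrt{r^{2}+u^{2}})$ gives this field explicitly on $L_{\varepsilon}$; integrating it once around the circle produces $n/\sqrt{n^{2}+\varepsilon^{2}}$ up to sign, which tends to $1$ as $\varepsilon\downarrow 0$.

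The main obstacle is purely computational: checking the vanishing of the three regular contributions, and carefully tracking the sign conventions (inherited from the authors' $(\mathbf{x}-\mathbf{r})\times d\mathbf{r}$ form of Biot-Savart together with the induced orientations on $L$ and $L_{\varepsilon}$) so that the surviving straight-wire circulation comes out equal to $+1$. Beyond Proposition~\ref{t5.2}(2) and this explicit evaluation, the argument requires no new ingredient.
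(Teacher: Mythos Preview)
Your argument is correct in outline, but it follows a different route from the paper's. The paper fixes $L$ and, using Proposition~\ref{t5.2} (applied on the $C$-side via the symmetry $\mathbf{A}(C,L)=\mathbf{A}(L,C)$), shows that $\mathbf{A}(C_n,L)$ is independent of $n$; it then sends $n\to\infty$, so that the three off-axis segments of $C_n$ recede to infinity and the remaining segment becomes the full $z$-axis $C^{\infty}$, for which $\frac{1}{4\pi}\int_L\int_{C^{\infty}}\cdots=1$ is quoted as well known. You instead fix $C_n$ and shrink $L$ to $L_{\varepsilon}$ via Proposition~\ref{t5.2}(2), sending $\varepsilon\to 0$; the three off-axis segments contribute bounded fields near the origin, hence circulations $O(\varepsilon)$, and the finite vertical segment yields (up to the sign you flag) $n/\sqrt{n^{2}+\varepsilon^{2}}\to 1$. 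The two limits are dual: the paper enlarges the current loop while you contract the test loop, and both collapse the problem to essentially the same straight-wire integral. The paper's version is shorter because it cites the infinite-wire result; yours is more self-contained because it computes the finite-wire circulation explicitly.

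One small correction: the three non-axial segments of $C_n$ do \emph{not} stay at distance $\ge n$ from the $z$-axis (the two horizontal segments meet the axis at $(0,0,\pm n)$). What is true, and what your argument actually needs, is that they stay at distance $\ge n$ from the \emph{origin}, hence are bounded away from $L_{\varepsilon}$ for every $\varepsilon<1$; this suffices to make their Biot--Savart contributions smooth near $\mathbf{0}$ and their circulations over $L_{\varepsilon}$ of order $\varepsilon$.
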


\begin{proof}
Thanks to Proposition \ref{t5.2}, we are sure that $\mathbf{A}\left(
C_{n},L\right)  $ is independent of $n$, for we have
\[
\mathbf{A}\left(  C_{n}\cup-C_{n+1},L\right)  =0
\]
as is to be seen easily. The curve $C_{n}$\ is composed of the curve
$C_{n}^{1}$\ moving up straight from $\left(  0,0,-n\right)  $ to $\left(
0,0,n\right)  $ and the curve $C_{n}^{2}$ moving horizontally from $\left(
0,0,n\right)  $ to $\left(  n,0,n\right)  $, then moving down straight from
$\left(  n,0,n\right)  $ to $\left(  n,0,-n\right)  $ and finally moves
horizontally from $\left(  n,0,-n\right)  $ to $\left(  0,0,-n\right)  $. Now
we have
\[
\mathbf{A}\left(  C_{n},L\right)  =\frac{1}{4\pi}\int_{L}\int_{C_{n}^{1}}%
\frac{\left(  \left(  \mathbf{s}-\mathbf{r}\right)  \times d\mathbf{r}\right)
\cdot d\mathbf{s}}{\left\Vert \mathbf{s}-\mathbf{r}\right\Vert ^{3}}+\frac
{1}{4\pi}\int_{L}\int_{C_{n}^{2}}\frac{\left(  \left(  \mathbf{s}%
-\mathbf{r}\right)  \times d\mathbf{r}\right)  \cdot d\mathbf{s}}{\left\Vert
\mathbf{s}-\mathbf{r}\right\Vert ^{3}}%
\]
where $\mathbf{s}$ moves along the curve $L$\ and $\mathbf{r}\ $moves along
the curve $C_{n}^{1}$\ or $C_{n}^{2}$. It is easy to see that we have
\[
\frac{1}{4\pi}\int_{L}\int_{C_{n}^{2}}\frac{\left(  \left(  \mathbf{s}%
-\mathbf{r}\right)  \times d\mathbf{r}\right)  \cdot d\mathbf{s}}{\left\Vert
\mathbf{s}-\mathbf{r}\right\Vert ^{3}}\rightarrow0
\]
as $n\rightarrow\infty$, while we have
\[
\frac{1}{4\pi}\int_{L}\int_{C_{n}^{1}}\frac{\left(  \left(  \mathbf{s}%
-\mathbf{r}\right)  \times d\mathbf{r}\right)  \cdot d\mathbf{s}}{\left\Vert
\mathbf{s}-\mathbf{r}\right\Vert ^{3}}\rightarrow\frac{1}{4\pi}\int_{L}%
\int_{C^{\infty}}\frac{\left(  \left(  \mathbf{s}-\mathbf{r}\right)  \times
d\mathbf{r}\right)  \cdot d\mathbf{s}}{\left\Vert \mathbf{s}-\mathbf{r}%
\right\Vert ^{3}}%
\]
as $n\rightarrow\infty$, where the curve $C^{\infty}$\ is no other than the
$z$-axis moving from $-\infty$\ to $+\infty$. It is well known that
\[
\frac{1}{4\pi}\int_{L}\int_{C^{\infty}}\frac{\left(  \left(  \mathbf{s}%
-\mathbf{r}\right)  \times d\mathbf{r}\right)  \cdot d\mathbf{s}}{\left\Vert
\mathbf{s}-\mathbf{r}\right\Vert ^{3}}=1
\]
Therefore we are done.
\end{proof}

\begin{theorem}
\label{t5.4}(The General Amp\`{e}re's Circuital Law) The Amp\`{e}re's law
(\ref{2.2.2}) obtains.
\end{theorem}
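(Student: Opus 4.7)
The goal is to reduce the identity $\mathbf{A}(C,L)=\mathbf{Lk}(C,L)$ to the single computation of Lemma \ref{t5.3} by exploiting the two invariance properties of $\mathbf{A}$ secured in Proposition \ref{t5.2}, applied once in the $L$-slot and once in the $C$-slot.

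First, I fix an oriented surface $\Sigma$ with $\partial\Sigma=L$ transversal to $C$, with intersection points $\mathbf{p}_{1},\ldots,\mathbf{p}_{k}$ and signs $\varepsilon_{1},\ldots,\varepsilon_{k}$, so that $\mathbf{Lk}(C,L)=\sum_{i=1}^{k}\varepsilon_{i}$ by definition. Around each $\mathbf{p}_{i}$ I excise a small open disk $D_{i}\subset\Sigma$ and let $\ell_{i}=\partial D_{i}$ carry the induced orientation. The punctured surface $\Sigma\setminus\bigcup_{i}D_{i}$ has boundary $L\cup\bigcup_{i}(-\ell_{i})$ and is disjoint from $C$, so Proposition \ref{t5.2}(2) applied with $L'=\bigcup_{i}\ell_{i}$, together with the additivity of $\int_{L'}$ over its components, yields
\[
\mathbf{A}(C,L)=\sum_{i=1}^{k}\mathbf{A}(C,\ell_{i}).
\]

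Next I fix an index $i$ and rescale so that $\ell_{i}$ is the unit circle in a plane perpendicular to $C$ at $\mathbf{p}_{i}$; this is legitimate because the integrand defining $\mathbf{A}$ is homogeneous of degree zero under a simultaneous rescaling of both curves. To deform $C$ while holding $\ell_{i}$ fixed, I appeal to the symmetric form of Proposition \ref{t5.2}(2), obtained by sandwiching a single use of that statement between two applications of Proposition \ref{t5.2}(1): whenever an oriented surface $T$ with $\partial T=C\cup(-C')$ avoids $\ell_{i}$, one has $\mathbf{A}(C,\ell_{i})=\mathbf{A}(C',\ell_{i})$. I take $C'=C_{n}^{\varepsilon_{i}}$, where $C_{n}^{+}$ is the rectangular curve $C_{n}$ of Lemma \ref{t5.3} and $C_{n}^{-}$ is its orientation reversal, arranged so that the central vertical segment of $C_{n}^{\varepsilon_{i}}$ threads $\ell_{i}$ in the direction prescribed by $\varepsilon_{i}$. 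Lemma \ref{t5.3} then delivers $\mathbf{A}(C_{n}^{\varepsilon_{i}},\ell_{i})=\varepsilon_{i}$, and summing gives $\mathbf{A}(C,L)=\sum_{i=1}^{k}\varepsilon_{i}=\mathbf{Lk}(C,L)$, which is (\ref{2.2.2}).

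The main obstacle I anticipate lies in this last cobordism step: for each $i$, one must genuinely construct an oriented surface $T_{i}$ with $\partial T_{i}=C\cup(-C_{n}^{\varepsilon_{i}})$ that stays disjoint from the tiny loop $\ell_{i}$. Geometrically this is entirely plausible, because $C$ and $C_{n}^{\varepsilon_{i}}$ can be arranged to coincide on the arc through $\mathbf{p}_{i}$ and to differ only in a region routed well away from $\ell_{i}$; but it is the single point in the argument that is not a purely symbolic invocation of Propositions \ref{t5.1}, \ref{t5.2} and Lemma \ref{t5.3}, and it will deserve the most careful treatment when the proof is written in full.
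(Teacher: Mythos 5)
Your proposal is correct and follows essentially the same route as the paper: excise small loops around the intersection points of $C$ with a surface bounding $L$, use Proposition \ref{t5.2} (i.e.\ Stokes' theorem together with Proposition \ref{t5.1}) to reduce $\mathbf{A}(C,L)$ to the sum of the contributions of those small loops, and evaluate each contribution as $\varepsilon_{i}$ via Lemma \ref{t5.3}. You are in fact more explicit than the paper about the last step --- the paper compresses the deformation of $C$ to the standard rectangular circuit into the phrase ``by dint of Lemma \ref{t5.3} with the aid of Proposition \ref{t5.2}'' --- and the cobordism construction you flag as the delicate point is equally present, and equally unaddressed, in the original.
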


\begin{proof}
Let $\varepsilon$\ be a very small positive number. To each $t\in\left[
0,t_{0}\right]  $, we consider the circle $\mathcal{C}_{\varepsilon}\left(
t\right)  $ with its center $\mathbf{m}\left(  t\right)  $ and its radius
$\varepsilon$ in the plane perpendicular to $\frac{d\mathbf{m}}{dt}\left(
t\right)  $. Then the totality of $\mathcal{C}\left(  t\right)  $ with $t$
ranging over $\left[  0,T\right]  $ forms a cylinder-like figure, which cuts
out $k$ circle-like curves from $S$. They are denoted by $L_{1},...,L_{k}$,
which surround the surfaces $S_{1},...,S_{k}$\textbf{\ }containing
$\mathbf{p}_{1},...,\mathbf{p}_{k}$, respectively. They are endowed with the
orientations induced from that of the surface $S$. Then the surface
$S^{\prime}$\ carved out by the curve $L\cup\left(  -L_{1}\right)  \cup
...\cup\left(  -L_{k}\right)  $ from $S$ no longer intersects the curve $C$,
so that we have
\[
\mathbf{A}\left(  C,L\cup\left(  -L_{1}\right)  \cup...\cup\left(
-L_{k}\right)  \right)  =0
\]
by dint of Stokes' Theorem and Proposition \ref{t5.1}. On the other hand, we
are sure by the very definition that
\[
\mathbf{A}\left(  C,L\cup\left(  -L_{1}\right)  \cup...\cup\left(
-L_{k}\right)  \right)  =\mathbf{A}\left(  C,L\right)  -\sum_{i=1}%
^{k}\mathbf{A}\left(  C,L_{i}\right)
\]
while we have%

\[
\mathbf{A}\left(  C,L_{i}\right)  =\mathbf{Lk}\left(  C,L_{i}\right)
\]
by dint of Lemma \ref{t5.3} with the aid of Proposition \ref{t5.2}. Therefore
we are done.
\end{proof}

\end{document}